\newtheorem{theo}{Theorem}
\def\delequal{\mathrel{\ensurestackMath{\stackon[1pt]{=}{\scriptstyle\Delta}}}}
\def\BibTeX{{\rm B\kern-.05em{\sc i\kern-.025em b}\kern-.08em
		T\kern-.1667em\lower.7ex\hbox{E}\kern-.125emX}}
\def\x{{\mathbf x}}
\def\t {{\mathbf \Theta}}
\def\tk {{\mathbf \Theta_{k}}}
\def\alp {{\boldsymbol \alpha}}
\def\pvk {{\mathbf p(\nu_k)}}
\def\pv {{\mathbf p(\nu)}}
\def\Pv {{\mathbf P(\nu)}}
\def\h{{\mathbf h}}
\def\x{{\mathbf x}}
\def\c{{\mathbf c}}
\def\n{{\mathbf n}}
\def\y{{\mathbf y}}
\def\g{{\mathbf g}}
\def\th {{\mathbf \Theta}}
\def\A {{\mathbf A}}
\def\a{{\mathbf a}}
\def\b{{\mathbf b}}
\def\R{{\mathbf R}}
\def\E{{\mathbf E}}
\def\C{{\mathbf C}}
\def\c{{\mathbf c}}
\def\n{{\mathbf n}}
\def\p {{\mathbf p}}
\def\mbJ{{\mathbf J}}
\def\mbI{{\mathbf I}}
\def\e {{\mathbf e}}
\def\A{{\mathbf A}}
\def\a{{\mathbf a}}
\def\b{{\mathbf b}}
\def\bzero{\boldsymbol{0}}
\def\Omegaa{{\mathbf \Omega}}
\def\Psii{{\mathbf \Psi}}
\def\bmu{{\mathbf \mu}}
\def\LoS{{\textrm{LoS}}}
\def\NLoS{{\textrm{NLoS}}}
\def\PRI{{\textrm{PRI}}}
\def\bJ{{\mathbf J}}
\def\Re#1{\mathrm{Re}\left(#1\right)}
\def\Im#1{\mathrm{Im}\left(#1\right)}
\begin{document}
	\title{IRS-Aided Radar: Enhanced Target Parameter  Estimation via Intelligent Reflecting Surfaces
		\thanks{This work was supported in part by the U.S. National Science Foundation Grant ECCS-1809225.}
	}\author{\IEEEauthorblockN{Zahra Esmaeilbeig$^{\dag,1}$, Kumar Vijay Mishra$^{\ddag,2}$, and Mojtaba Soltanalian$^{\dag,3}$}
		\IEEEauthorblockA{$^{\dag}$ Department of Electrical and Computer Engineering\\
			University of Illinois at Chicago, Chicago, Illinois, USA\\
			\IEEEauthorblockA{$^{\ddag}$United States DEVCOM
				Army Research Laboratory, Adelphi, Maryland, USA\\
				Email: \{$^{1}$zesmae2, $^{3}$msol\}@uic.edu,$^{2}$kvm@ieee.org}
		}
	}
	\maketitle
\begin{abstract}
	The intelligent reflecting surface (IRS) technology has recently attracted a lot of interest in wireless communications research. An IRS consists of passive reflective elements capable of tuning the phase, amplitude, frequency, and polarization of impinging waveforms.  We investigate the deployment of IRS to aid radar systems when the line-of-sight (LoS) link to the targets is weak or blocked. We demonstrate that deployment of multiple IRS platforms provides a virtual or non-line-of-sight (NLoS) link between the radar and target leading to an enhanced radar performance. Numerical experiments indicate that the IRS enhances the target parameter estimation when the LoS link is weaker by \textasciitilde$10^{-1}$ in comparison to the NLoS link.
\end{abstract}
\begin{IEEEkeywords}
		Intelligent reflecting surface, radar, programmable metasurfaces, target estimation,  wireless communications, Cram\'er-Rao bound.
\end{IEEEkeywords}
\section{Introduction}
\label{sec:intro}
An intelligent  reflecting surface (IRS) is composed of a large number of passive reconfigurable meta-material elements, which reflect the incoming signal by introducing a predetermined phase shift \cite{hodge2020intelligent}. In a communication system, this phase shift is controlled via an external signal transmitted by the base station (BS) through a backhaul control link. As a result, the incoming signal from the BS is manipulated in real time, thereby, efficiently reflecting the received signal toward the users~\cite{ozdogan2020using,asadchy2016perfect,estakhri2016wave,elbir2020survey}. The IRS  technology  has appeared in wireless communications, also under other names including  large intelligent  surface and  software-controlled metasurfaces~\cite{nadeem2019large,bjornson2019massive,liaskos2018new}. 
Several promising IRS use-cases including range extension to users with obstructed  direct links~\cite{nadeem2019large}, physical layer security~\cite{guan2020intelligent}, and unmanned air vehicle (UAV) communications~\cite{li2020reconfigurable} have been studied. Some prior works on IRS-assisted signal transmission are~\cite{wu2019intelligent,li2020reconfigurable,wang2019joint,huang2018achievable,wu2018intelligent}. 

In this context, IRS deployment has an untapped potential in radar system design and signal processing for target detection and estimation~\cite{khobahi2020deep}-~\cite{bose2021efficient}. In an IRS-aided radar, the surface manipulates the signal coming from the radar transmitter (target) and reflects it toward the target (radar receiver) (Fig.~\ref{fig::3}).  Lately, IRS has emerged as a promising and  cost-effective solution to establish robust connections even when  the line-of-sight (LoS) link is  blocked by obstructions~\cite{tan2018enabling}. There have been several prior works on non-line-of-sight (NLoS) %\textcolor{red}{use only NLoS, not NLOS. Check everywhere.} 
radar systems without the aid of IRS \cite{watson2019non,guo2020behind,solomitckii2021radar}. However, these techniques require knowledge of the entire geometric structure of the environment. In addition, processing the mutipath returns from a target is computationally demanding. The IRS-aided NLoS radar is a paradigm shift because the location of the IRS platforms and flexibility in beamforming via IRS are sufficient to perform target detection and estimation. By smartly tuning the phase shifts of IRS passive elements, effective NLoS or virtual LoS links are created thereby yielding a more reliable sensing of targets. 

The IRS-aided radar for NLoS scenarios was introduced in \cite{aubry2021ris} and extended to multiple-input multiple-output (MIMO) radar in \cite{buzzi2021radar,zhang2022}. In this paper, we develop a mathematical model for  IRS-aided radar parameter estimation and investigate  the potential gains associated with the IRS deployment in such settings. Contrary to most prior works \cite{aubry2021ris,buzzi2021radar,zhang2022} %\textcolor{red}{no need for - here. Just put commas. LaTeX will format it then.} 
that focused on the NLoS sensing via a single IRS, we incorporate  multiple IRS platforms~\cite{vijay2022,elbir2022rise,wei2022}. We develop the general signal model for a  multiple IRS-aided radar, in which the IRS acts as a phase shift component and benchmark the performance of IRS platforms through mean square error of  target parameter estimation. % We  then consider the single-input single-output (SISO) radar 
%\textcolor{red}{makes no sense why we would describe MIMO model and then shift to SISO?} 
We derive the best linear unbiased estimator (BLUE) for estimating the target back-scattering coefficient. Our numerical experiments show that using IRS even with  randomly chosen phase shifts improve the mean-squared-error of target parameter estimation. We further study the optimization of the IRS platform  by designing phase shifts to specifically minimize the mean-squared-error of target parameter estimation. As expected, the optimized IRS case leads to lower estimation error in comparison with the non-optimized IRS. We further derive the Cram\'er-Rao bound (CRB) for estimation of the target parameter and illustrate it  for the LoS and NLoS scenarios. 

Throughout this paper, we use bold lowercase letters for vectors and bold uppercase letters for matrices. The notations $(\cdot)^T$ and $(\cdot)^H$ denote the vector/matrix transpose and the  Hermitian transpose, respectively. The symbols $\odot$ and $\otimes$ stand for the Hadamard (element-wise)  and Kronecker product of matrices; $\textrm{Tr}(\cdot)$ is the trace operator for matrices;  $\textrm{Diag}(.)$ denotes the diagonalization operator that produces a diagonal matrix  with same diagonal entries as the entries of its  vector argument; and $\textrm{diag}(.)$ outputs a vector containing the  diagonal entries of the input matrix. $\|\cdot\|_2$ is the $\ell_2$ norm. Finally, $\textrm{arg}(.)$, $\operatorname{Re}(\cdot)$ and $\operatorname{Im}(\cdot)$ return the arguments, real part and imaginary part  of a complex input vector, respectively.

\section{System Model}\label{sec::model}
 %--------------------------------------------------
	\begin{figure}[t]
		\centering
		\includegraphics[width=0.75\columnwidth]{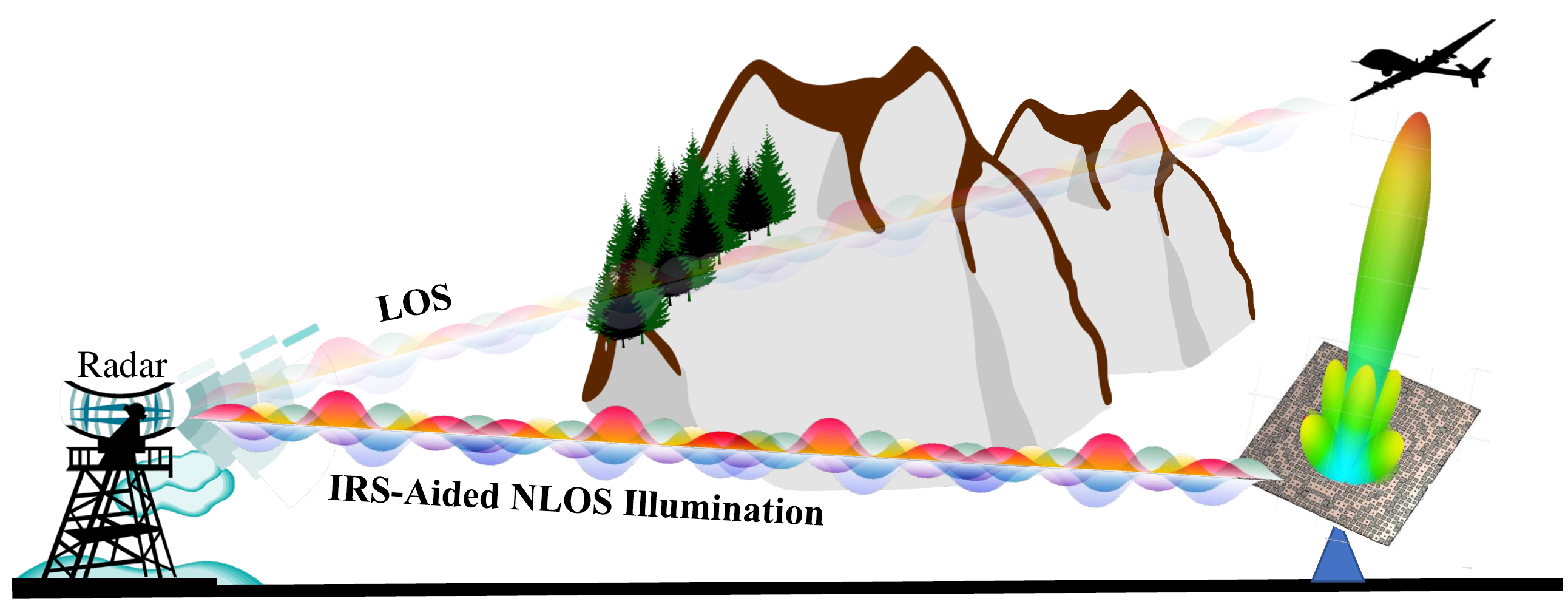}
		\caption{An illustration of IRS-aided radar operation. The IRS  creates effective virtual LoS links between the radar and the desired targets. Large IRS platforms may be required for far-field deployment.}
		\label{fig::3}
	\end{figure}
	%--------------------------------------------------
Consider a pulse-Doppler radar system that transmits a train of $N$ uniformly-spaced pulses $x(t)$, each of which is nonzero over the support $[0, T_p]$, as 
\begin{equation}
s(t)=\sum_{n=0}^{N-1}x(t-nT_{_{\PRI}}), \;   0\leq t \leq (N-1)T_{_{\PRI}},
\end{equation}
where $T_{_{\PRI}}$ is the pulse repetition interval (PRI). The entire duration of all $N$ pulses is the coherent processing interval (CPI) (``slow time''). 
%\textcolor{red}{Now write, consider a target  with reflectivity .... Cite only skolnik's book \cite{skolnik2008radar} as reference for the target model, See how to write this para using the first para of Section 1.3.1 of my chapter \cite{mishra2019sub} \url{https://arxiv.org/pdf/1803.01819.pdf} }
Assume the target scene consists of a single Swerling-0 model \cite{skolnik2008radar} moving  target %\textcolor{blue}{located at range-time delay $\tau$}
, characterized by unknown parameters: complex reflectivity/amplitude $\alpha_{_T}$ and the normalized target Doppler shift $\nu$ (expressed in radians) relative to the radar; note that, in the formulation of this paper, the target range is assumed to be known. 

In the absence of an IRS, the transmit signal is reflected back from the target and collected by the radar receiver (Fig.~\ref{fig::LOS-fig}). The baseband continuous-time received signal %for a single pulse 
is\par\noindent\small % \textcolor{red}{where are pulses and PRI in the equation below?}
\begin{align}
 y(t)&=\alpha_{_T}h_{_{\LoS}} \sum_{n=0}^{N-1}x(t-nT_{_{\PRI}}) e^{j \nu t} +n(t),\nonumber\\
 &\approx \alpha_{_T}h_{_{\LoS}} \sum_{n=0}^{N-1}x(t-nT_{_{\PRI}}) e^{j \nu nT_p} +n(t),
\end{align}\normalsize
where  $n(t)$ is random additive noise, %\textcolor{red}{this is not a `sample'. just noise. sample is when you have performed sampling, i.e., in discrete-time domain} 
%~\cite{gini2012waveform,aubry2012cognitive,ameri2019one,aubry2013knowledge}.  
$h_{_{\LoS}}$ accounts for  the radar-target-radar channel state information (CSI), and the last approximation follows from the fact that $\nu \ll 1/T_p$ so that the phase rotation within the CPI could be approximated as a constant. 

Each snapshot of the received signal are sampled at the rate $1/T_{_{p}}$ yielding a total of $\lfloor T_{\textrm{PRI}}/T_p\rfloor$ ``fast-time'' samples. As mentioned earlier, we assume the range of the target is known.
%from the range bin corresponding to the delay $\tau$.
%\textcolor{red}{what is Tc? pulsewidth?} %\textcolor{red}{no need of `let us' in paper. Directly write `Define'} 
 At this fixed target range in fast-time, we collect all $N$ slow-time samples of the received signal corresponding to each pulse in the vector $\y=[y(0),y(T_{_{p}}),\ldots,y((N-1)T_{_{p}})]^T$ as %\textcolor{red}{you earlier wrote the signal in `matrix form'. But the following is a vector form. I have removed `matrix form' phrase now}
\begin{equation}\label{eq::2}
\y=\alpha_{_T}h_{_{\LoS}}\left[  \x  \odot\pv \right]  +\n,
\end{equation}
where $\p(\nu)=[1,e^{jT_{_{p}}\nu} ,\ldots,e^{j(N-1)T_{_{p}}\nu }]^T$.
$\x = [\x(0),\x(T_{_{p}}),\ldots,\x((N-1)T_{_{p}})]^T$ and $\n =[n(0),n(T_{_{p}}),\ldots,n((N-1)T_{_{p}})]^T$ are, respectively, transmit and noise signal vectors and $\n$ is zero-mean random vector with covariance matrix $\R$~\cite{elbir2019cognitive}. 

We now consider the received signal in the presence of an IRS (Fig.~\ref{fig::4}). Assume $K$ IRS platforms are deployed and NLoS paths are realized through IRS platforms between the radar and target. An IRS is typically  deployed as an array of discrete scattering elements. Each element (also known as a meta-atom or lattice) has the ability to introduce a phase shift to an incident wave. Assume that each IRS is equipped  with $M$ reflecting elements. Each IRS element reflects the incident signal with a phase shift and amplitude change that is configured via a smart controller. Define
	\begin{equation}
		\t_k=\text{\textrm{Diag}}(\beta_{k,1}e^{\textrm{j}\theta_{k,1}},\ldots,\beta_{k,M}e^{\textrm{j}\theta_{k,M}})
		\label{eq::4}
	\end{equation}
	as the phase-shift matrix of the $k$-th IRS, where $\theta_{k,m}\in[0,2\pi]$, and $\beta_{k,m}\in [0,1]$ are, respectively, phase shift and amplitude reflection gain associated with the $m$-th passive element of the $k$-th IRS. In general, it suffices to design only the phase-shift so that  $\beta_{k,m}=1$ for all $(k,m)$~\cite{vijay2022,ahmed2022joint,gini2012waveform,aubry2012cognitive}.

The path via the $k$-th IRS is characterized by the corresponding CSI  $h_{_{\NLoS,k}}$, IRS phase shift matrix $\tk$, Doppler shift $\nu_k$, and the target reflectivity amplitude $\alpha_{_{T,k}}$. Denote %\textcolor{red}{this is not an assumption. Use language like `Denote' ..... by ...symbol} 
the radar-IRS$_k$ and target-IRS$_k$ CSI by, respectively, $\g_k \in \mathbb{C}^{M}$ and $\h_k \in \mathbb{C}^{M}$. The CSI for all the paths between radar, target and IRS platforms are assumed to be well-estimated through suitable channel estimation techniques~\cite{zheng2022survey}. We define %\textcolor{red}{do not use `let' and `suppose' in research papers, Use denote, define, etc. } 
$\h_{_\NLoS}=[h_{_{\NLoS,k}},\ldots,h_{_{\NLoS,K}}]^T$  as  the NLoS CSI  vector. Using the channel reciprocity of IRS \cite{tang2021channel}, the NLoS CSI is
	\begin{equation}\label{eq::CSI-vec}
	\h_{_\NLoS}=\left[\left|\h_1^H\th_1\g_1\right|^2,\ldots,\left|\h_K^H\th_K \g_K\right|^2\right]^T.
	\end{equation}
The received signal is a superposition of the reflected signals from all NLoS paths as %yields the received signal in a multiple IRS-aided radar as
\begin{equation}\label{eq::NLoS-SISO}
\y=\alpha_{_T}h_{_{\LoS}}\left[  \x  \odot\pv \right]+\sum_{k=1}^{K}\alpha_{_{T,k}}h_{_{\NLoS,k}}\left[ \x  \odot\pvk \right] +\n.
\end{equation}

 While both LoS and NLoS signals are available at the receiver, we aim to show the effectiveness of IRS-created NLoS in overcoming obstructed or weak LoS links. Therefore, throughout this paper, we consider the case when the LoS link strength is  insignificant, i.e. $\h_{\LoS}\simeq \bzero$ and the signal received through NLoS is used to obtain target information. Denote the complex reflectivity vector by %$\alpha_{_{T,k}}$ 
	 $\alp=[\alpha_{_{T,1}},\alpha_{_{T,2}},\ldots,\alpha_{_{T,K}}]^T$
	 of a moving target for $k \in \{1,\ldots,K\}$ paths. % as \textcolor{red}{where is the design?} a mismatched filter \cite{stoica2011optimization} for estimating target back-scattering coefficients
	%$\alp=[\alpha_{_{T,1}},\alpha_{_{T,2}},\ldots,\alpha_{_{T,K}}]^T$. Note that 
	Rewrite the received signal in~\eqref{eq::NLoS-SISO} compactly as %may be written as 
	\begin{equation}\label{eq::6}	
		\y=\A \alp +\n,
	\end{equation}
	where %, for the case of multiple IRS platforms, $K\geq1$,  
	$\A=[\a_1,\ldots,\a_K]\in \mathbb{C}^{N \times K}$ is the sensing matrix with columns
	\begin{equation}
	\label{eq::7}	
		\a_k \delequal 	\h_{_{\NLoS,k}} \left[ \x \odot \p(\nu_k) \right].
	\end{equation}
	%--------------------------------------
	\begin{figure}[t]
    \centering
	\includegraphics[width=0.68\columnwidth]{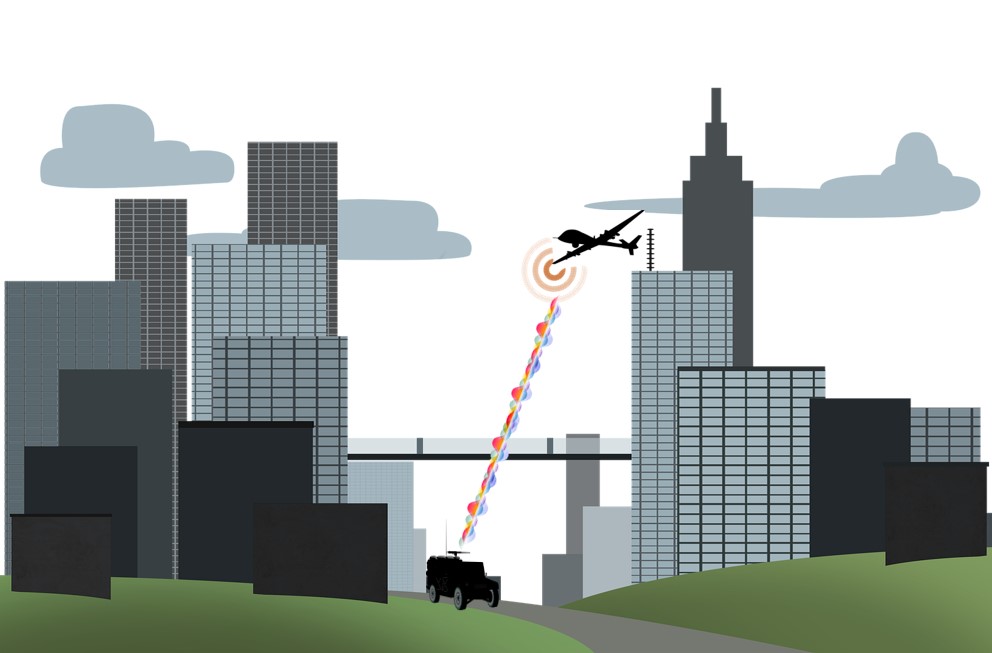}
		\caption{LoS  link between the  radar and the target }
		\label{fig::LOS-fig}
	\end{figure}
	%--------------------------------------
	 Our goal is to obtain BLUE for $\alpha_{_{T,k}}$ for all $k$ paths.
	 %In the following, derive the  BLUE estimator, as well as the CRB as a lower bound on the estimation performance of $\alpha_{_{T,k}}$.
	
	\nocite{esmaeilbeig2020deep}
	\nocite{eamaz2021modified}
	\section{IRS-Aided Target Parameter Estimation}
	\label{sec::prob}
	Following the Gauss-Markov theorem~\cite{kay1993fundamentals}, the BLUE for $\alp$ is
	\begin{equation}
	\label{eq::8}
	\hat{\alp}=\left(\A^H\R^{-1}\A\right)^{-1}\A^H \R^{-1}\y.
    \end{equation}
	
    The covariance  matrix of $\hat{\alp}$ is  
	\begin{equation}
	\label{eq::10}	
		\C_{\hat{\alp}}=\left(\A^H\R^{-1}\A\right)^{-1},
	\end{equation}
	with the minimum achieved variance for $\hat{\alpha}_k$ given by
	$\textrm{var}(\hat{\alpha_k})=\left[\left(\A^H\R^{-1}\A\right)^{-1}\right]_{kk}$. The  overall mean-squared-error (MSE) of the proposed estimator is thus  $\text{MSE}(\hat{\alp})= \textrm{Tr} \left(\left(\A^H\R^{-1}\A\right)^{-1}\right)$. The optimal  phase-shift matrices $\{\tk\}, ~k\in\{1,\ldots,K\}$ to minimize the MSE of target  cross-section parameter are obtained by solving the following problem\par\noindent\small
	\begin{equation}\label{eq::13}	
		\underset{\tk,~k\in\{1,\ldots,K\}}{\textrm{minimize}} \text{MSE}(\hat{\alp})=\underset{\tk,~k\in\{1,\ldots,K\}}{\textrm{minimize}}\textrm{Tr}\left(\left(\A^H\R^{-1}\A\right)^{-1}\right). 
	\end{equation}\normalsize
%\textcolor{red}{Why? This is very confusing. If you are not doing MIMO, why introduce it in the first place?}	
The following theorem states that the  optimal phase shifts of the $K$ different IRS platforms are decoupled. Also, the optimal phase shift for IRS$_k$ %\textcolor{red}{write this as `$k$-th IRS'. Always use -th e.g. $k$-th IRS, $p$-th vector, $n$-th sample. Not $n^{\textrm{th}}$ sample.}
compensates for the total phases in the channels $\g_k$ and $\h_k$.
\begin{theo}
The solution to the optimization problem\par\noindent\small
		\begin{equation}
			\underset{\tk,~~k\in\{1,\ldots,K\}}{\textrm{minimize}}~ \textrm{MSE}(\hat{\alp}),
			\label{eq::17}	
		\end{equation}
		is 
		\begin{equation}\label{eq::18}
			\t_k^*=\textrm{Diag}\left(e^{\textrm{j}\text{\textrm{arg}}(\c_k)}\right),
		\end{equation}\normalsize
		where $\c_k\delequal\text{\textrm{Diag}}(\g_k)^H \h_k$.
	\end{theo}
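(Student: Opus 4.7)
The plan is to exploit a multiplicative structure of the sensing matrix that isolates all IRS-dependent quantities into a single diagonal factor. Setting $\tilde{\A}=[\x\odot\p(\nu_1),\ldots,\x\odot\p(\nu_K)]$ and $\D=\textrm{Diag}(h_{_{\NLoS,1}},\ldots,h_{_{\NLoS,K}})$, equation~\eqref{eq::7} gives $\A=\tilde{\A}\D$. Because $\D$ is real and diagonal (indeed non-negative, by~\eqref{eq::CSI-vec}), one obtains
\begin{equation}
(\A^H\R^{-1}\A)^{-1}=\D^{-1}\,\mathbf{M}\,\D^{-1},\quad \mathbf{M}\delequal(\tilde{\A}^H\R^{-1}\tilde{\A})^{-1},
\end{equation}
where $\mathbf{M}$ is independent of the phase-shift matrices $\tk$.

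Taking the trace yields
\begin{equation}
\textrm{MSE}(\hat{\alp})=\sum_{k=1}^{K}\frac{[\mathbf{M}]_{kk}}{h_{_{\NLoS,k}}^{2}}.
\end{equation}
Since $\R\succ\mathbf{0}$ and $\tilde{\A}$ has full column rank under distinct Doppler shifts $\nu_k$, $\mathbf{M}$ is positive definite, so each diagonal entry $[\mathbf{M}]_{kk}$ is strictly positive and independent of $\{\tk\}$. The minimization in~\eqref{eq::17} therefore \emph{decouples} across IRS platforms into $K$ independent sub-problems, each of which amounts to maximizing $h_{_{\NLoS,k}}=|\h_k^H\tk\g_k|^{2}$ over $\tk$ alone.

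For a fixed $k$, the diagonal structure of $\tk$ in~\eqref{eq::4} with $\beta_{k,m}=1$ lets me expand $\h_k^H\tk\g_k=\sum_{m=1}^{M}e^{\textrm{j}\theta_{k,m}}\,(h_{k,m}^{*}g_{k,m})$. The triangle inequality then yields $|\h_k^H\tk\g_k|\leq\sum_{m=1}^{M}|h_{k,m}^{*}g_{k,m}|$, with equality iff every summand shares a common phase. Aligning all summands on the positive real axis requires $\theta_{k,m}=\arg(g_{k,m}^{*}h_{k,m})$; recognizing that the $m$-th entry of $\c_k=\textrm{Diag}(\g_k)^H\h_k$ is precisely $g_{k,m}^{*}h_{k,m}$ gives $\t_k^{*}=\textrm{Diag}(e^{\textrm{j}\,\textrm{arg}(\c_k)})$, as claimed.

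The main obstacle is the initial factorization $\A=\tilde{\A}\D$ and the observation that $\D$ being real diagonal makes it pass cleanly through inverse and trace, purging every cross-term between distinct IRS designs; once decoupling is established, the per-IRS problem is a standard phase-alignment argument. The standing assumption $\beta_{k,m}=1$ stated after~\eqref{eq::4} is what makes the triangle-inequality bound attainable by a pure phase-shift choice, and the optimum is unique only up to a common global phase per IRS, which the stated formula fixes canonically at zero.
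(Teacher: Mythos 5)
Your proof is correct and follows essentially the same route as the paper's: factor $\A$ as a $\tk$-independent matrix times $\textrm{Diag}(\h_{_\NLoS})$, observe that the MSE then decouples into $K$ terms each decreasing in $h_{_{\NLoS,k}}=|\h_k^H\tk\g_k|^2$, and solve each per-IRS subproblem by phase alignment of $\textrm{diag}(\tk)$ with $\c_k$. Your handling of the inverse, $(\A^H\R^{-1}\A)^{-1}=\D^{-1}\mathbf{M}\D^{-1}$ via congruence with the square invertible $\D$, is in fact tidier than the paper's step, which formally writes $\A^{-1}$ and $\Omegaa^{-1}$ for non-square matrices.
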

\begin{proof}
Define the IRS-observed Doppler shift matrix as $\Pv \delequal [\p(\nu_1),\ldots,\p(\nu_K)]\in \mathbb{C}^{N \times K}$. Incorporating this definition in~\eqref{eq::6}-\eqref{eq::7}, we have $\A=\textrm{Diag}(\x) \Pv\textrm{Diag}(\h_{_\NLoS})$, which is used to compute\par\noindent\small
		\begin{equation}
				\left(\A^H \R^{-1} \A\right)^{-1}= \A^{-1}\R \A^{-H}
				=\textrm{Diag}(\h_{_\NLoS})^{-1} \, \Psi \, \textrm{Diag}(\h_{_\NLoS})^{-H},\\
			\label{eq::22}
		\end{equation}\normalsize
		with $\Psii\delequal \Omegaa^{-1} \R  \Omegaa ^H$ and $\Omegaa\delequal \textrm{Diag}(\x)\Pv$. Substituting~\eqref{eq::22} in~\eqref{eq::13}  yields the optimization problem\par\noindent\small
		\begin{equation}
		\underset{\tk,~k\in\{1,\ldots,K\}}{\textrm{minimize}}\; \sum_{k=1}^{K} |h_{_\NLoS,k}|^{-2}\Psi_{kk}=\underset{\tk,~k\in\{1,\ldots,K\}}{\textrm{maximize}} \;\left|\h_k^H \t_k \g_k\right|. \nonumber
			\label{eq::23}	
		\end{equation}\normalsize
Considering the property  $\a \odot \b= \textrm{Diag}(\a)\b$ of the Hadamard product as well as the  diagonal structure of  $\t_k$, we write
\begin{equation}		\label{eq::15}
\h_k^H \t_k \g_k=\h_k^H [\textrm{diag}(\t_k)\odot \g_k]= \c_k^H \textrm{diag}(\t_k).
\end{equation}
It now follows that the optimal solution to~\eqref{eq::13} is~\eqref{eq::18}.
\end{proof}
\begin{figure}[t]
	\centering
	\vspace{0.05in}
    \includegraphics[width=0.68\columnwidth]{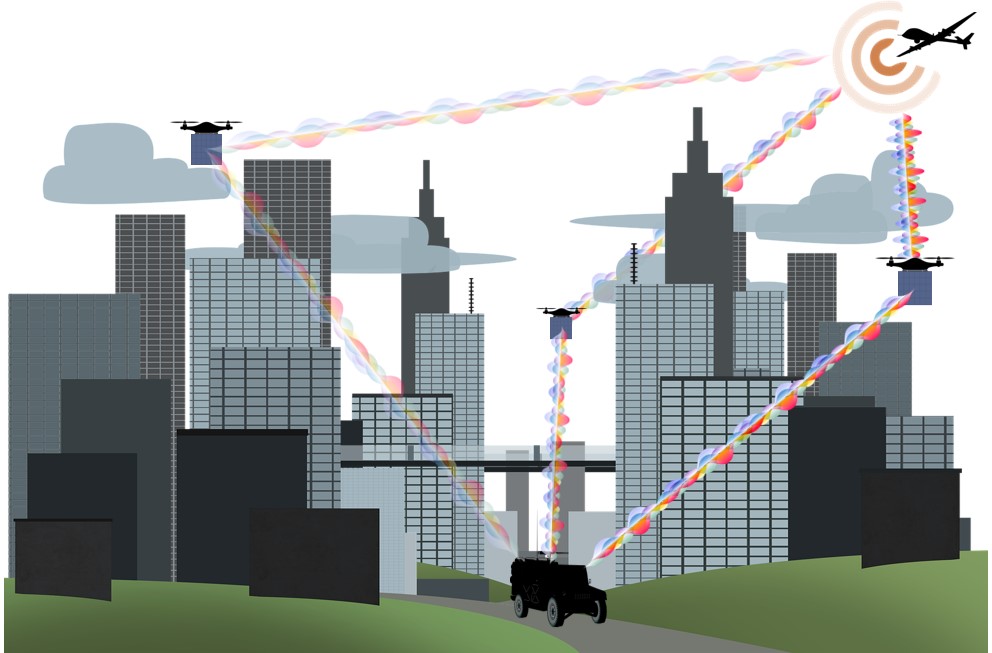}
	\caption{NLoS or virtual LoS link between the  radar and the  target provided by  $K=3$ IRS platforms.}
	\label{fig::4}
\end{figure}
\section{Error Bound Analysis}
 We analyze the CRB of the proposed IRS-aided target parameter estimation. Assume
 \begin{equation}
     \Tilde{\alp}=[\alp_{R}^T,\alp_I^T]^T
\end{equation}
with $\alp_{R}=\Re{\alp}$, $\alp_I=\Im{\alp}$. For an unbiased estimator of the parameter $\alp$, the covariance matrix of  $\hat{\tilde{\alp}}$ is lower bounded as $\C_{\hat{\Tilde{\alp}}}=\E\{{(\hat{\Tilde{\alp}}-\Tilde{\alp})(\hat{\Tilde{\alp}}-\Tilde{\alp})^H}\}\geq \C_{_{\textrm{CRB}}}$, in the sense that the difference  $\C_{\hat{\Tilde{\alp}}}-\C_{_{\textrm{CRB}}}$ is a positive semidefinite matrix~\cite{lu2021intelligent,peebles2007radar,kay1993fundamentals,van2004optimum}.
 %The first step to  derive the $C_{_{\textrm{CRB}}}$ is to  compute the  Fisher information matrix (FIM) $\mbJ$. 
% \textcolor{red}{are you sure? complex-valued CRB would have different expressions. Why not just do the analysis of complex CRB? See the derivation of complex CRB in S M Kay's book. I remember it is in the Appendix of one of the chapters}
 We divide the Fisher information matrix (FIM), $\bJ$ into submatrices as
\par\noindent\small
\begin{equation}
\label{eq::fisher-submatrices_alpha}
		\bJ=\begin{bmatrix}
			\bJ_{\alp_R,\alp_R}& \bJ_{\alp_R,\alp_I}\\
			\bJ_{\alp_I,\alp_R}&\bJ_{\alp_I,\alp_I}
		\end{bmatrix}.
\end{equation}\normalsize
 Using the Slepian-Bangs formula~\cite{stoica2005spectral} for the  observation vector $\y$,  with a Gaussian  distribution $\y \sim N(\bmu,\R)$, the  $(m,n)$-th element of the Fisher information matrix (FIM) is\par\noindent\small
\begin{equation}
\bJ_{mn}=\textrm{Tr} \left(\R^{-1}\frac{\partial \R }{\partial\Tilde{\alp}_m} \R^{-1}\frac{\partial \R }{\partial\Tilde{\alp}_n}\right)+2\operatorname{Re}\left(\frac{\partial\bmu}{\partial\Tilde{\alp}_m}^H\R^{-1}\frac{\partial\bmu}{\partial\Tilde{\alp}_n}\right).
\end{equation}  \normalsize
Following~\eqref{eq::6}, we have $\bmu = \A \alp$. The FIM elements are
\par\noindent\small
\begin{align}\label{eq::25}
[\bJ_{\alp_R,\alp_R}]_{mn}&=2\operatorname{Re}\left(\frac{\partial\bmu}{\partial\alp_{R_m}}^H\R^{-1}\frac{\partial\bmu}{\partial\alp_{R_n}}\right)=2\operatorname{Re}\left(\a_m^H \R^{-1} \a_n\right)\nonumber\\
&=2\operatorname{Re}\left(\a_m^H\R^{-1}\a_m\right)=2\operatorname{Re}\left(\e_m^T\A^H\R^{-1}\A\e_n\right),
\end{align}\normalsize
where $\e_m$ is a $K \times 1$ vector, whose $m$-th element is unity and remaining elements are zero. Similarly, other submatrices of  the FIM are \par\noindent\small
\begin{align}\label{eq::J-alpha}
\bJ_{\alp_{I},\alp_{I}}&=\bJ_{\alp_{R},\alp_{R}}=2\Re{\A^H \R^{-1} \A},\nonumber\\
\bJ_{\alp_{R},\alp_{I}}&=-\bJ_{\alp_{I},\alp_{R}}=-2\Im{\A^H \R^{-1} \A}.
\end{align}
\normalsize
substituting~\eqref{eq::J-alpha} in~\eqref{eq::fisher-submatrices_alpha}, we get
\begin{equation}
\bJ=2\Re{[1 \; j]^H\otimes[1 \; j]\otimes (\A^H \R^{-1} \A)}, 
\end{equation}
the inverse of which yields %is  the  $C_{_{\textrm{CRB}}}$, i.e.
$\C_{_{\textrm{CRB}}}=\mbJ^{-1}$.
\begin{comment}
For a large  FIM, taking the inverse analytically  is usually done using the blocked matrix inversion theorem~\cite{van2004optimum}. Assume  $K=2$ as an  example, we  have  the  CRB on the estimator  $\hat{\alp}_1$ as $var(\hat{\alp_1})=[\C_{\hat{\alp}}]_{11}\geq[J^{-1}]_{11}= (J_{11}-J_{12}J_{22}^{-1} J_{21})^{-1}$, which using~\eqref{eq::25} becomes
\begin{equation}
\small
var(\hat{\alp_1})\geq \frac{\sigma_n^2}{2}\left(||\a_1||^2-\frac{\operatorname{Re}(\a_1^Ha_2)^2}{||\a_2||^2}\right)^{-1}.
\normalsize
\end{equation}
\end{comment}
\section{Numerical Experiments}
\label{sec::num}
	We validated the performance of target parameter estimation in IRS-aided radar through numerical experiments. Throughout our experiments, $\x \in \mathbb{C}^{N}$ is a unimodular code that is randomly chosen with the length $N=50$ i.e., $\x_n=e^{j\phi_n}, \quad n \in\{1,\ldots, 50\}$~\cite{soltanalian2014designing}. We set $K=5$ and $M=10$. We generated  the noise vector $\n$ from  an independent and identically distributed random Gaussian process i.e. $\R=\sigma_n^2 \mbI$.
  %It is notable that proper selection of these parameters is also an open %IRS design problem for future studies. 
  We consider the following scenarios:
	\begin{itemize}
		\item An LoS path (Fig.~\ref{fig::LOS-fig}) is present between the radar and target with the  CSI $\h_{_\LoS}$.
		%such that $||\r_{_{\textbf{LOS}}}||_{_{2}}=||\H_{rtr}||_{_{2}}$ . 
		\item There is an NLoS  path (Fig.~\ref{fig::4}) through $K$ IRS platforms with non optimal $\theta_{k,m}$, $k\in\{1,\ldots,K\}$ and  $m\in\{1,\ldots,M\}$ chosen randomly in the interval $[0,2\pi)$. 
		\item There is an NLoS path with the  CSI  $\h_{_{\textbf{NLoS}}}$. The IRS phase-shift parameters $\t_k, k\in\{1,\ldots,K\}$ are optimized and set according to~\eqref{eq::18}. 
	\end{itemize}
	We define the \emph{LoS-to-NLoS signal-to-noise ratio (SNR)} as \par\noindent\small
	\begin{equation}
	\gamma\delequal\frac{|\alpha_{_T} h_{\LoS}|^2}{||\alp^T \h_{_\NLoS}||^2_{_2}},
	\end{equation}\normalsize
    which governs the relative strengths of the LoS and NLoS links. 
    %----------------------------------
	\begin{figure}[t]
		\includegraphics[width=0.87\columnwidth]{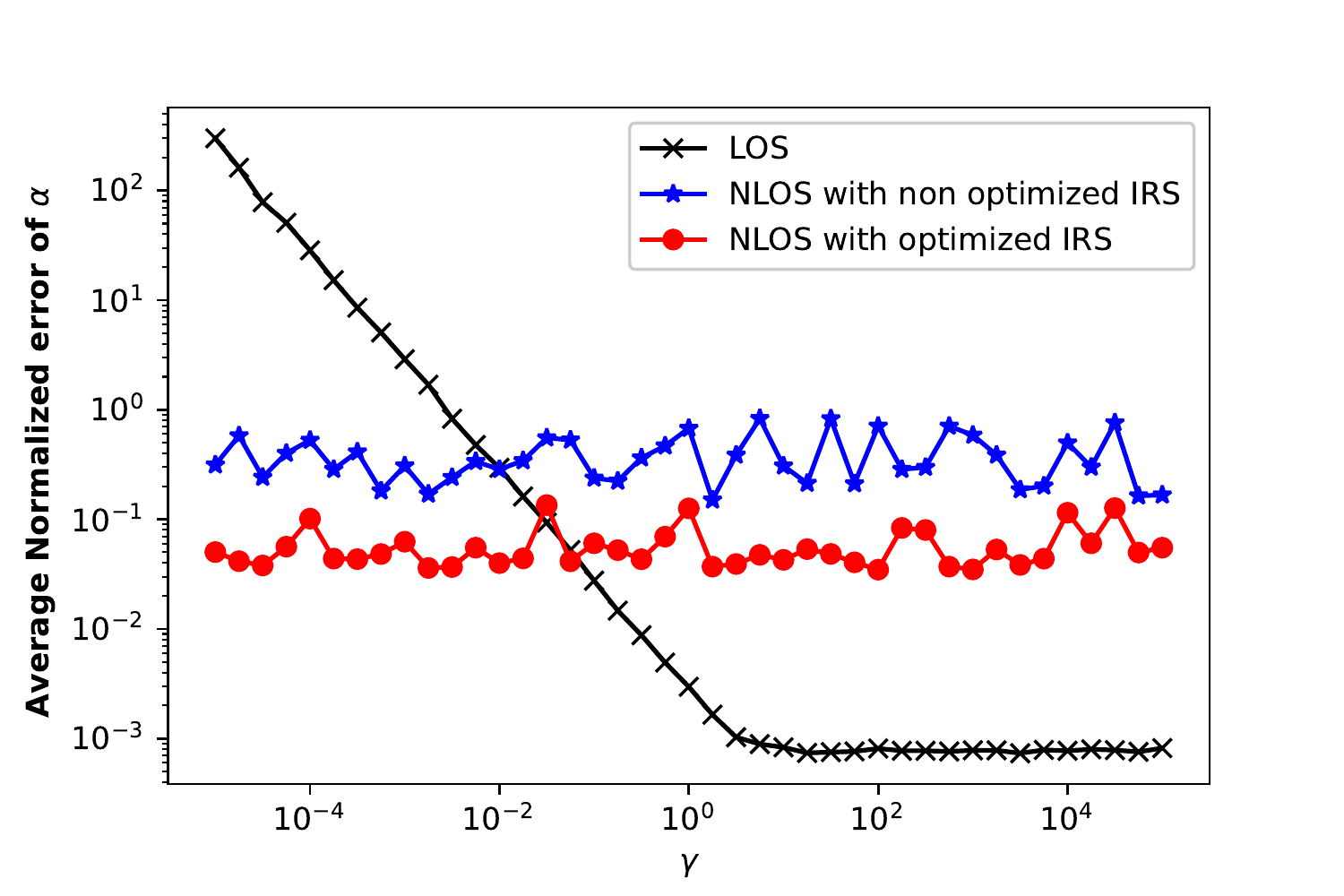}
		\caption{NMSE for estimation  of target scattering coefficient $\alp$ for different values of  $\gamma\in [10^{-5},10^{5}]$, $K=5$ IRS platforms, and $M=10$.}
		\label{fig::2}
	\end{figure}
	%----------------------------------
	%----------------------------------
    \begin{figure}[t]
		\includegraphics[width=0.87\columnwidth,draft=false]{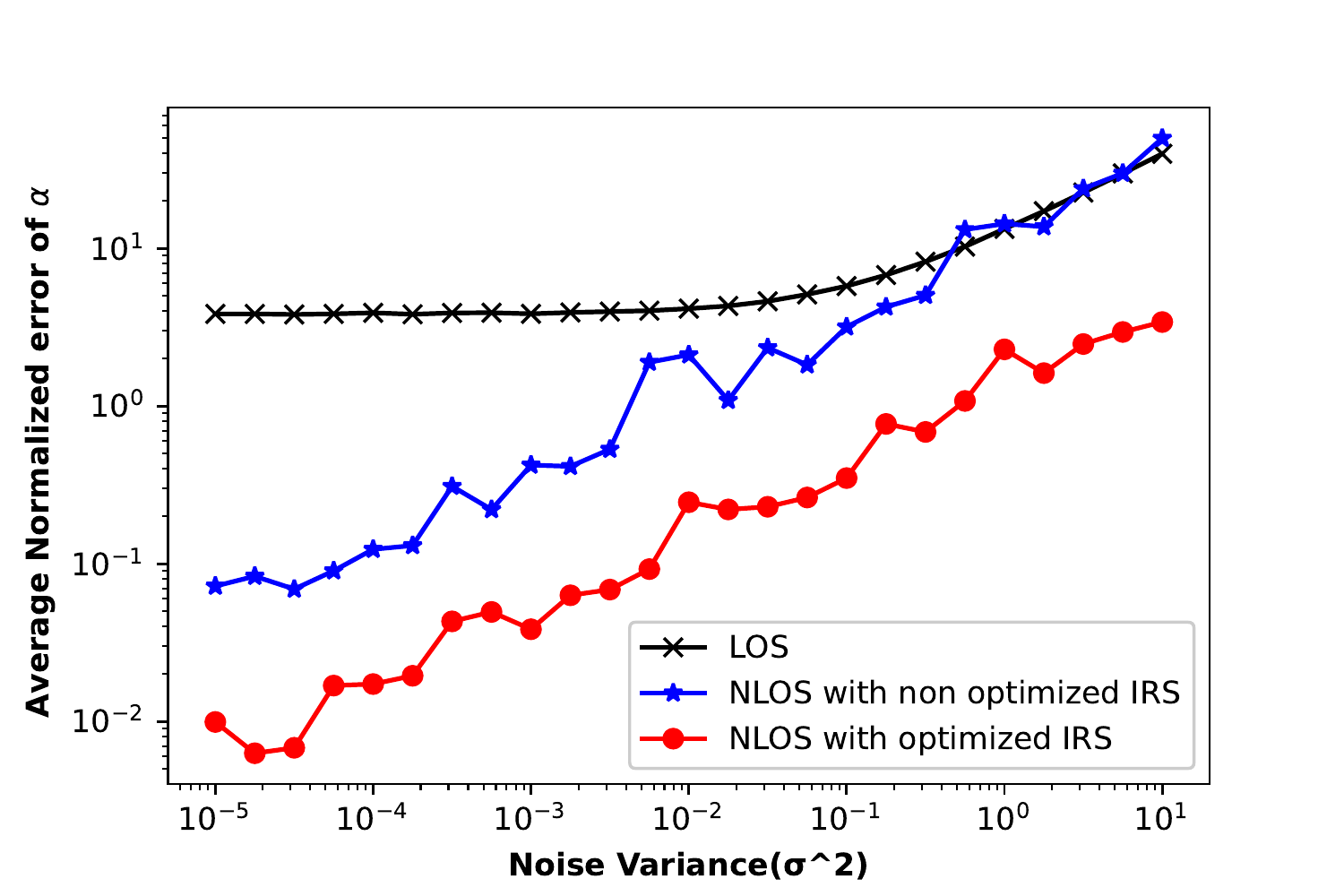}
		\caption{Average NMSE of the recovered target scattering coefficient $\alp$ for different noise variances $\sigma_n^2 \in \left[10^{-5},1\right]$, with $K=5$  IRS platforms, $M=10$ reflecting elements and the LoS-to-NLoS SNR $\gamma=10^{-2}$.}
		\label{fig::6}
	\end{figure}
	%----------------------------------
	We use the normalized estimation error of the
	back-scattering coefficient $\alp$, defined by 
	$\text{NMSE}\delequal\frac{||\alp-\hat{\alp}||_{_2}}{||\alp||_{_2}}$,
	as a measure of performance for our estimators. In Fig.~\ref{fig::2}, we illustrate the effectiveness of the optimized and non-optimized IRS over different strengths of the link between the radar and the target. In order to control the LoS-to-NLoS SNR $\gamma$, we generated the LoS and NLoS channels such that $|\alpha_{_T} h_{\LoS}|^2=\gamma$ and $||\alp^T \h_{_\NLoS}||^2_{_2}=1$. The CSI for all channels involved is sampled from an independent circularly symmetric complex Gaussian random vector with zero mean and variance of unity and scaled  such that we have $\gamma\in [10^{-5},10^{5}]$~\cite{vijay2022}. The Doppler shifts in  both the LoS and NLoS  is chosen from a random uniform distribution on $ [-0.5,0.5)$~\cite{ameri2019one}. The results are averaged over $10^3$  Monte-Carlo trials. The perturbations in Figs.\ref{fig::2}-\ref{fig::9} arise from the randomness of the channels and Doppler shifts in each Monte-Carlo sample.
	
	Fig.~\ref{fig::2} indicates that the IRS overcomes the LoS links as weak as $~10^{-1}$ times the NLoS link.
	%In other words, the optimized and non-optimized  IRS will remain effective with increasing LOS link strength until we approach $\gamma=10^{-1}$.%
	As expected, the optimization of the IRS platform leads to lower NMSE values in comparison with the non-optimized IRS under the same LoS-to-NLoS SNR. This reveals both the potential of using the virtual link provided by IRS in  place of the LoS link when it is weak or obstructed and the gains provided by IRS optimization. Fig.~\ref{fig::6} shows the normalized estimation error of the back-scattering coefficient $\alp$ with respect to  the noise variance. It follows from  Fig.~\ref{fig::6} that when the LoS-to-NLoS SNR  is  set to $10^{-2}$, the NLoS outperforms the LoS link.
	%---------------------------------
	\begin{figure}[tb]
		\includegraphics[width=0.87\columnwidth]{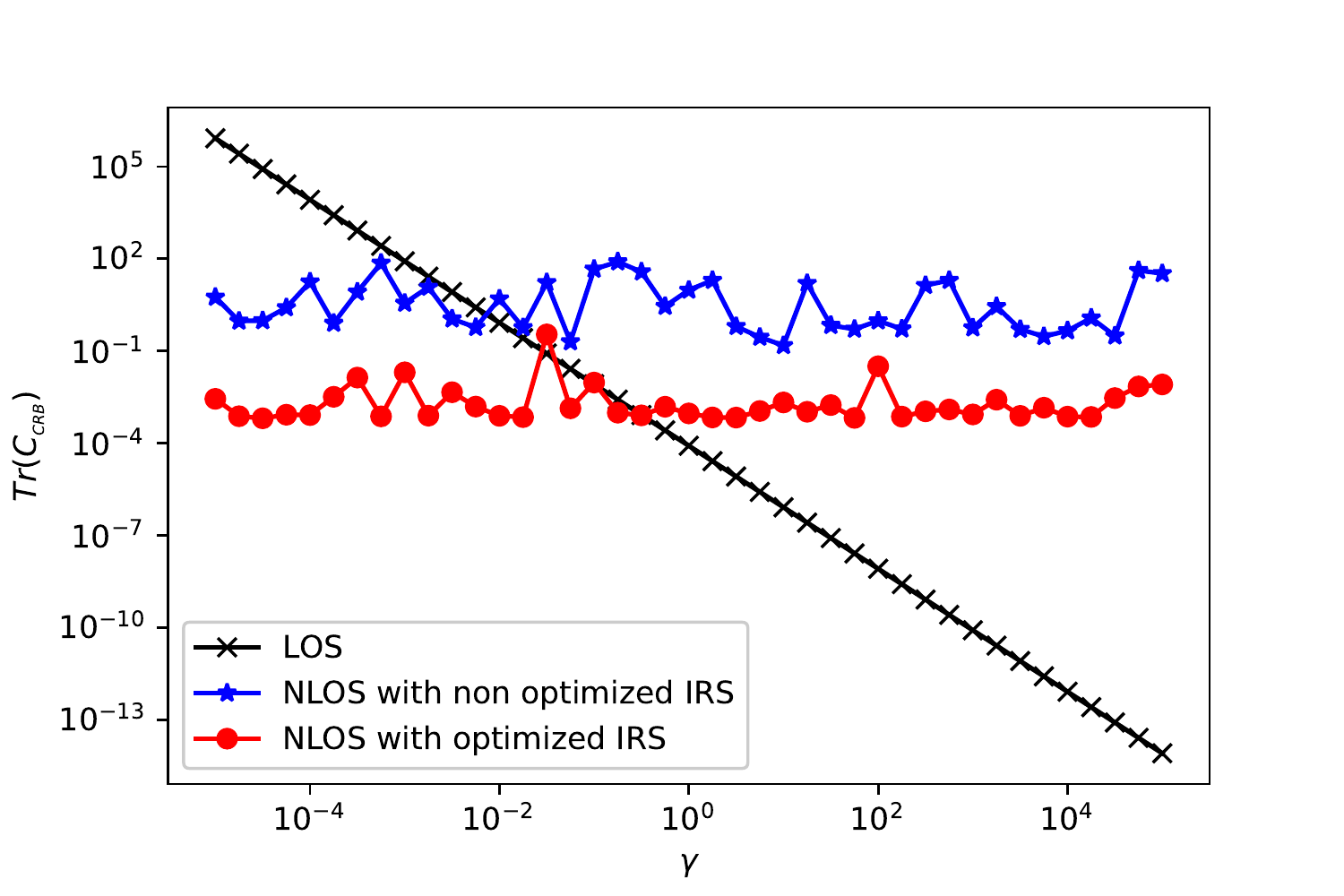}
		\caption{The CRB of the target scattering coefficient $\Tilde{\alp}$ for different values of  $\gamma\in [10^{-5},10^{5}]$, with $K=5$ and $M=10$.}
		\label{fig::9}
	\end{figure}
	%---------------------------------
	Fig.~\ref{fig::9} illustrates that the  CRB of the estimator $\hat{\alp}$ in IRS-aided radar overcomes the LoS links as weak as $10^{-1}$, i.e. in the same regime of $\gamma$, where IRS was effective as per the NMSE measure in Fig.~\ref{fig::2}. For illustration, the  $A$-optimality criteria i.e.   $\textrm{Tr}(\C_{_{\textrm{CRB}}})$ is chosen as a scalar measure of the CRB~\cite{tohidi2018sparse}.
	\section{Summary}
	\label{sec::conclusion}
	We studied the deployment of IRS in narrowband radar sensing and we presented an initial study on the effectiveness of IRS in assisting target estimation in radar. The formulation proposed in this paper is useful as a baseline for other IRS-aided radar settings. We derived the optimal IRS phases in terms of the mean square error of target parameter estimation. %We evaluated the CRB for target parameter estimation in the optimized and non optimized IRS deployment. 
	We indicated that IRS aids in target parameter estimation when the LoS link is weaker in relative SNR by \textasciitilde$10^{-1}$ than the NLoS link. Our numerical experiments reveal the effectiveness of the IRS even with non-optimized phase shifts.
	
%	\clearpage
	\balance
\bibliographystyle{IEEEtran}
\bibliography{refs}

% Generated by IEEEtran.bst, version: 1.14 (2015/08/26)
\begin{thebibliography}{10}
\providecommand{\url}[1]{#1}
\csname url@samestyle\endcsname
\providecommand{\newblock}{\relax}
\providecommand{\bibinfo}[2]{#2}
\providecommand{\BIBentrySTDinterwordspacing}{\spaceskip=0pt\relax}
\providecommand{\BIBentryALTinterwordstretchfactor}{4}
\providecommand{\BIBentryALTinterwordspacing}{\spaceskip=\fontdimen2\font plus
\BIBentryALTinterwordstretchfactor\fontdimen3\font minus
  \fontdimen4\font\relax}
\providecommand{\BIBforeignlanguage}[2]{{%
\expandafter\ifx\csname l@#1\endcsname\relax
\typeout{** WARNING: IEEEtran.bst: No hyphenation pattern has been}%
\typeout{** loaded for the language `#1'. Using the pattern for}%
\typeout{** the default language instead.}%
\else
\language=\csname l@#1\endcsname
\fi
#2}}
\providecommand{\BIBdecl}{\relax}
\BIBdecl

\bibitem{hodge2020intelligent}
J.~A. Hodge, K.~V. Mishra, and A.~I. Zaghloul, ``Intelligent time-varying
  metasurface transceiver for index modulation in {6G} wireless networks,''
  \emph{IEEE Antennas and Wireless Propagation Letters}, vol.~19, no.~11, pp.
  1891--1895, 2020.

\bibitem{ozdogan2020using}
{\"O}.~{\"O}zdogan, E.~Bj{\"o}rnson, and E.~G. Larsson, ``Using intelligent
  reflecting surfaces for rank improvement in {MIMO} communications,'' in
  \emph{IEEE International Conference on Acoustics, Speech and Signal
  Processing}, 2020, pp. 9160--9164.

\bibitem{asadchy2016perfect}
V.~S. Asadchy, M.~Albooyeh, S.~N. Tcvetkova, A.~D{\'\i}az-Rubio, Y.~Ra'di, and
  S.~Tretyakov, ``Perfect control of reflection and refraction using spatially
  dispersive metasurfaces,'' \emph{Physical Review B}, vol.~94, no.~7, p.
  075142, 2016.

\bibitem{estakhri2016wave}
N.~M. Estakhri and A.~Alu, ``Wave-front transformation with gradient
  metasurfaces,'' \emph{Physical Review X}, vol.~6, no.~4, p. 041008, 2016.

\bibitem{elbir2020survey}
A.~M. Elbir and K.~V. Mishra, ``A survey of deep learning architectures for
  intelligent reflecting surfaces,'' \emph{arXiv preprint arXiv:2009.02540},
  2020.

\bibitem{nadeem2019large}
Q.~Nadeem, A.~Kammoun, A.~Chaaban, M.~Debbah, and M.~Alouini, ``Large
  intelligent surface assisted {MIMO} communications,'' \emph{arXiv preprint
  arXiv:1903.08127}, 2019.

\bibitem{bjornson2019massive}
E.~Bj{\"o}rnson, L.~Sanguinetti, H.~Wymeersch, J.~Hoydis, and T.~L. Marzetta,
  ``Massive {MIMO} is a reality - {W}hat is next?: {F}ive promising research
  directions for antenna arrays,'' \emph{Digital Signal Processing}, vol.~94,
  pp. 3--20, 2019.

\bibitem{liaskos2018new}
C.~Liaskos, S.~Nie, A.~Tsioliaridou, A.~Pitsillides, S.~Ioannidis, and
  I.~Akyildiz, ``A new wireless communication paradigm through
  software-controlled metasurfaces,'' \emph{IEEE Communications Magazine},
  vol.~56, no.~9, pp. 162--169, 2018.

\bibitem{guan2020intelligent}
X.~Guan, Q.~Wu, and R.~Zhang, ``Intelligent reflecting surface assisted secrecy
  communication: {I}s artificial noise helpful or not?'' \emph{IEEE Wireless
  Communications Letters}, vol.~9, no.~6, pp. 778--782, 2020.

\bibitem{li2020reconfigurable}
S.~Li, B.~Duo, X.~Yuan, Y.-C. Liang, and M.~Di~Renzo, ``Reconfigurable
  intelligent surface assisted {UAV} communication: {J}oint trajectory design
  and passive beamforming,'' \emph{IEEE Wireless Communications Letters},
  vol.~9, no.~5, pp. 716--720, 2020.

\bibitem{wu2019intelligent}
Q.~Wu and R.~Zhang, ``Intelligent reflecting surface enhanced wireless network
  via joint active and passive beamforming,'' \emph{IEEE Transactions on
  Wireless Communications}, vol.~18, no.~11, pp. 5394--5409, 2019.

\bibitem{wang2019joint}
P.~Wang, J.~Fang, and H.~Li, ``Joint beamforming for intelligent reflecting
  surface-assisted millimeter wave communications,'' \emph{arXiv preprint
  arXiv:1910.08541}, 2019.

\bibitem{huang2018achievable}
C.~Huang, A.~Zappone, M.~Debbah, and C.~Yuen, ``Achievable rate maximization by
  passive intelligent mirrors,'' in \emph{IEEE International Conference on
  Acoustics, Speech and Signal Processing}, 2018, pp. 3714--3718.

\bibitem{wu2018intelligent}
Q.~Wu and R.~Zhang, ``Intelligent reflecting surface enhanced wireless network:
  {J}oint active and passive beamforming design,'' in \emph{IEEE Global
  Communications Conference}, 2018, pp. 1--6.

\bibitem{khobahi2020deep}
S.~Khobahi, A.~Bose, and M.~Soltanalian, ``Deep radar waveform design for
  efficient automotive radar sensing,'' in \emph{IEEE Sensor Array and
  Multichannel Signal Processing Workshop}, 2020, pp. 1--5.

\bibitem{bose2021efficient}
A.~Bose, S.~Khobahi, and M.~Soltanalian, ``Efficient waveform covariance matrix
  design and antenna selection for {MIMO} radar,'' \emph{Signal Processing},
  vol. 183, p. 107985, 2021.

\bibitem{tan2018enabling}
X.~Tan, Z.~Sun, D.~Koutsonikolas, and J.~M. Jornet, ``Enabling indoor mobile
  millimeter-wave networks based on smart reflect-arrays,'' in \emph{IEEE
  Conference on Computer Communications}, 2018, pp. 270--278.

\bibitem{watson2019non}
B.~Watson and J.~R. Guerci, \emph{Non-line-of-sight radar}.\hskip 1em plus
  0.5em minus 0.4em\relax Artech House, 2019.

\bibitem{guo2020behind}
S.~Guo, Q.~Zhao, G.~Cui, S.~Li, L.~Kong, and X.~Yang, ``Behind corner targets
  location using small aperture millimeter wave radar in {NLOS} urban
  environment,'' \emph{IEEE Journal of Selected Topics in Applied Earth
  Observations and Remote Sensing}, vol.~13, pp. 460--470, 2020.

\bibitem{solomitckii2021radar}
D.~Solomitckii, M.~Heino, S.~Buddappagari, M.~A. Hein, and M.~Valkama, ``Radar
  scheme with raised reflector for {NLOS} vehicle detection,'' \emph{IEEE
  Transactions on Intelligent Transportation Systems}, 2021.

\bibitem{aubry2021ris}
A.~Aubry, A.~De~Maio, and M.~Rosamilia, ``{RIS}-aided radar sensing in {N-LOS}
  environment,'' in \emph{IEEE International Workshop on Metrology for
  AeroSpace}, 2021, pp. 277--282.

\bibitem{buzzi2021radar}
S.~Buzzi, E.~Grossi, M.~Lops, and L.~Venturino, ``Foundations of {MIMO} radar
  detection aided by reconfigurable intelligent surfaces,'' \emph{IEEE
  Transactions on Signal Processing}, vol.~70, pp. 1749--1763, 2022.

\bibitem{zhang2022}
H.~Zhang, H.~Zhang, B.~Di, K.~Bian, Z.~Han, and L.~Song, ``{MetaRadar}:
  {M}ulti-target detection for reconfigurable intelligent surface aided radar
  systems,'' \emph{IEEE Transactions on Wireless Communications}, 2022, in
  press.

\bibitem{vijay2022}
K.~V. Mishra, A.~Chattopadhyay, S.~S. Acharjee, and A.~P. Petropulu,
  ``{OptM3Sec}: {O}ptimizing multicast {IRS}-aided multiantenna {DFRC} secrecy
  channel with multiple eavesdroppers,'' in \emph{IEEE International Conference
  on Acoustics, Speech and Signal Processing}, 2022, in press.

\bibitem{elbir2022rise}
A.~M. Elbir, K.~V. Mishra, M.~Shankar, and S.~Chatzinotas, ``The rise of
  intelligent reflecting surfaces in integrated sensing and communications
  paradigms,'' \emph{arXiv preprint arXiv:2204.07265}, 2022.

\bibitem{wei2022}
T.~Wei, L.~Wu, K.~V. Mishra, and M.~R.~B. Shankar, ``Multiple {IRS}-assisted
  wideband dual-function radar-communication,'' in \emph{IEEE International
  Symposium on Joint Communications \& Sensing}, 2022, pp. 1--5.

\bibitem{skolnik2008radar}
M.~I. Skolnik, \emph{Radar handbook}, 3rd~ed.\hskip 1em plus 0.5em minus
  0.4em\relax McGraw-Hill, 2008.

\bibitem{elbir2019cognitive}
A.~M. Elbir, K.~V. Mishra, and Y.~C. Eldar, ``Cognitive radar antenna selection
  via deep learning,'' \emph{IET Radar, Sonar \& Navigation}, vol.~13, no.~6,
  pp. 871--880, 2019.

\bibitem{ahmed2022joint}
M.~F. Ahmed, K.~P. Rajput, N.~K. Venkategowda, K.~V. Mishra, and A.~K.
  Jagannatham, ``Joint transmit and reflective beamformer design for secure
  estimation in {IRS}-aided {WSN}s,'' \emph{IEEE Signal Processing Letters},
  vol.~29, pp. 692--696, 2022.

\bibitem{gini2012waveform}
F.~Gini, A.~De~Maio, and L.~Patton, \emph{Waveform design and diversity for
  advanced radar systems}.\hskip 1em plus 0.5em minus 0.4em\relax IET Press,
  2012.

\bibitem{aubry2012cognitive}
A.~Aubry, A.~De~Maio, M.~Piezzo, A.~Farina, and M.~Wicks, ``Cognitive design of
  the receive filter and transmitted phase code in reverberating environment,''
  \emph{IET Radar, Sonar \& Navigation}, vol.~6, no.~9, pp. 822--833, 2012.

\bibitem{zheng2022survey}
B.~Zheng, C.~You, W.~Mei, and R.~Zhang, ``A survey on channel estimation and
  practical passive beamforming design for intelligent reflecting surface aided
  wireless communications,'' \emph{IEEE Communications Surveys \& Tutorials},
  2022.

\bibitem{tang2021channel}
W.~Tang, X.~Chen, M.~Z. Chen, J.~Y. Dai, Y.~Han, S.~Jin, Q.~Cheng, G.~Y. Li,
  and T.~J. Cui, ``On channel reciprocity in reconfigurable intelligent surface
  assisted wireless networks,'' \emph{IEEE Wireless Communications}, vol.~28,
  no.~6, pp. 94--101, 2021.

\bibitem{esmaeilbeig2020deep}
Z.~Esmaeilbeig, S.~Khobahi, and M.~Soltanalian, ``Deep-{RLS}: {A}
  model-inspired deep learning approach to nonlinear {PCA},'' \emph{arXiv
  preprint arXiv:2011.07458}, 2020.

\bibitem{eamaz2021modified}
A.~Eamaz, F.~Yeganegi, and M.~Soltanalian, ``Modified arcsine law for one-bit
  sampled stationary signals with time-varying thresholds,'' in \emph{ICASSP
  2021-2021 IEEE International Conference on Acoustics, Speech and Signal
  Processing (ICASSP)}.\hskip 1em plus 0.5em minus 0.4em\relax IEEE, 2021, pp.
  5459--5463.

\bibitem{kay1993fundamentals}
S.~M. Kay, \emph{Fundamentals of statistical signal processing: {E}stimation
  theory}.\hskip 1em plus 0.5em minus 0.4em\relax Pearson Education, 1993,
  vol.~1.

\bibitem{lu2021intelligent}
W.~Lu, B.~Deng, Q.~Fang, X.~Wen, and S.~Peng, ``Intelligent reflecting
  surface-enhanced target detection in {MIMO} radar,'' \emph{IEEE Sensors
  Letters}, vol.~5, no.~2, pp. 1--4, 2021.

\bibitem{peebles2007radar}
P.~Z. Peebles, \emph{Radar principles}.\hskip 1em plus 0.5em minus 0.4em\relax
  John Wiley \& Sons, 2007.

\bibitem{van2004optimum}
H.~L. Van~Trees, \emph{Optimum array processing: Part {IV} of detection,
  estimation, and modulation theory}.\hskip 1em plus 0.5em minus 0.4em\relax
  John Wiley \& Sons, 2004.

\bibitem{stoica2005spectral}
P.~Stoica and R.~L. Moses, \emph{Spectral analysis of signals}.\hskip 1em plus
  0.5em minus 0.4em\relax Prentice Hall, 2005.

\bibitem{soltanalian2014designing}
M.~Soltanalian and P.~Stoica, ``Designing unimodular codes via quadratic
  optimization,'' \emph{IEEE Transactions on Signal Processing}, vol.~62,
  no.~5, pp. 1221--1234, 2014.

\bibitem{ameri2019one}
A.~Ameri and M.~Soltanalian, ``One-bit radar processing for moving target
  detection,'' in \emph{IEEE Radar Conference}, 2019, pp. 1--6.

\bibitem{tohidi2018sparse}
E.~Tohidi, M.~Coutino, S.~P. Chepuri, H.~Behroozi, M.~M. Nayebi, and G.~Leus,
  ``Sparse antenna and pulse placement for colocated mimo radar,'' \emph{IEEE
  Transactions on Signal Processing}, vol.~67, no.~3, pp. 579--593, 2018.

\end{thebibliography}
\end{document}